\numberwithin{equation}{section}
\newtheorem{theo}{Theorem}[section]
\newtheorem{lem}[theo]{Lemma}
\theoremstyle{definition}
\newtheorem{defi}[theo]{Definition}
\newtheorem{bsp}[theo]{Example}
\begin{document}
\begin{titlepage}

\date{\today}
\begin{center}
{\bf\Large On the relation between geometric\\ and deformation
quantization}

\vskip1.5cm

{\normalsize Christoph Nölle}

\vskip1cm

{\it Institut für Theoretische Physik und Astrophysik, Universität Kiel, \\
 Leibnizstraße 15, 24105 Kiel, Germany}

\end{center}

\vskip1.5cm

\begin{abstract}
  In this paper we investigate the possibility of constructing a
  complete quantization procedure consisting of geometric and deformation
  quantization. The latter assigns a noncommutative
  algebra to a symplectic manifold, by deforming the
  ordinary pointwise product of functions, whereas geometric
  quantization is a prescription for the construction of a Hilbert
  space and a few quantum operators, starting from a symplectic
  manifold. We determine under which conditions it is possible to define a representation
  of the deformed algebra on this Hilbert space, thereby to extend the small class of quantizable
  observables in geometric quantization to all smooth functions, as
  well as to give a natural representation of the algebra.
  In particular we look at the special cases of a cotangent bundle and a Kähler
  manifold.

\end{abstract}
\end{titlepage}


\section{Introduction}

 There are two important concepts for quantization of arbitrary
 symplectic manifolds $(M,\omega)$, or phase spaces. Geometric quantization
 was developed in an attempt to unify several special quantization
 schemes that arose in applications (see \cite{woodhouse:quantisierung} and references therein), and like in ordinary quantum
 mechanics on a flat phase space, physical states are elements
 of a Hilbert space $\mathcal H$, explicitly represented as wave functions on a subspace of phase
 space. This subspace is not uniquely determined, but depends on the
 choice of a polarization, although for finite-dimensional spaces different polarizations are supposed to
 lead to equivalent theories, like position and momentum space
 representations. There is a problem with the observables though,
 because for a fixed polarization only a very limited class of them can be
 quantized.

\bigskip

 On the other hand there is deformation quantization \cite{bayenundvieleandere1,
 bayenundvieleandere2, dito-sternheimer_deformueberblick}, whose
 philosophy is to introduce a new product $\ast$, called a star
 product, on some function algebra over $M$, say $C^\infty(M)$.
 The condition for the star product to be a deformation of the
 ordinary product is expressed by demanding
  $$ f\ast g = fg + \mathcal O(\hbar)$$
 for all $f,g\in C^\infty(M)$, so that Planck's constant $\hbar$ plays the role
 of a deformation parameter. Further the condition
  $$ [f,g]_\ast:=f\ast g-g\ast f=i\hbar\{f,g\} +\mathcal
  O(\hbar^2)$$
 is imposed, where $\{\cdot,\cdot\}$ denotes the classical Poisson
 bracket, generalizing Dirac's quantum condition \cite{dirac:fundeqqm}.
 Kontsevich \cite{kontsevich_deformPoisson} has proved that any Poisson manifold admits a star
 product, but for the case of a symplectic manifold there exists a
 much simpler construction, due to Fedosov \cite{fedosov:deformationart,fedosov:deformationbook}. We will
 consider only the symplectic case here. It is notable that the
 star product is not uniquely determined by $(M,\omega)$ either, see the brief review of
 Fedosov's construction in section \ref{DefoQuant}.

  Giving the full
 construction of the algebra of quantum observables $\mathcal A$,
 deformation quantization has the drawback not to offer a physical
 interpretation of the states. As for $C^\ast$ algebras states can be defined as
 positive linear functionals on the algebra \cite{bordem_statesandGNS}, but
 they do not have a natural identification as square roots of volume elements on 'configuration
 space', as in geometric quantization.

 \bigskip

 Therefore it would be desirable to have a representation of the
 deformed algebra $\mathcal A$ on the Hilbert space $\mathcal H$,
 thus offering a physical interpretation of states in
 deformation quantization, and extending the algebra of observables
 in geometric quantization. As the latter already defines a
 quantization map $\rho'$ for a subset of $C^\infty(M)$, one might hope to find a unique
 extension $\rho$ of $\rho'$ to all of $\mathcal A$ by requiring it to be an
 algebra homomorphism. There is then an obvious compatibility
 condition to be fulfilled, viz. that the original
 $\rho'$ respects the star product:
  \begin{equation}\label{kompatibilitaetIntro}
         \rho'(f\ast g) = \rho'(f)\ast \rho'(g)
  \end{equation}
 for any $f,g\in C^\infty(M)$ with the property that $f,g$ and
 also $f\ast g$ are geometric quantizable. As neither geometric nor
 deformation quantization are uniquely determined by $\omega$, but
 depend on further structure, one should not expect equation
 (\ref{kompatibilitaetIntro}) to hold for any arbitrary choice of
 these structures, but rather to give a compatibility condition for
 them. The best possible result would then be to get a unique (class of)
 solution(s) to the compatibility condition, thus fixing the ambiguities
 in the separate quantization procedures.

\bigskip

 Usually deformation and geometric quantization are considered as
 competing theories, and deformation quantization is often referred
 to as the more promising candidate for a consistent theory of
 quantization, e.g. in \cite{dito-sternheimer_deformueberblick}.
 A point of view similar to ours is taken by Landsman in \cite{landsman_classQuant}.

\bigskip

 We will restrict our attention to two important special cases, a
 cotangent bundle $T^\ast Q$ with the canonical two-form, and a
 Kähler manifold. For cotangent bundles, equation (\ref{kompatibilitaetIntro}) is a direct
 consequence of a result in \cite{BordemannNW-StarprodKot}, whereas
 for Kähler manifolds with their most natural polarization and connection, a full proof
 remains out of reach. Therefore, we restrict ourselves to providing some evidence in favor
 of compatibility.

\section{Geometric quantization}
 This section contains a concise summary of the theory,
 as it is presented in the monograph
 \cite{woodhouse:quantisierung}. Starting with a symplectic manifold
 $(M,\omega)$, our aim is the construction of a Hilbert space $\mathcal H$ and
 a quantization map $f\mapsto \hat f$, assigning an operator on
 $\mathcal H$ to a smooth function $f$ on $M$. The following conditions
 are to be satisfied:
    \begin{description}
    \item[\qquad(i)] linearity: $\widehat{\lambda f+g}=\lambda \hat
    f+\hat g$, $\lambda \in \mathbb C$.
    \item[\qquad(ii)] $f$ constant implies $\hat f=f\mathbf 1$,
    \item[\qquad(iii)] the correspondence principle: $[\hat f,\hat g] =
    i\hbar\widehat{\{f,g\}}$.
  \end{description}
 There is a simple construction obeying these conditions, consisting
 of a Hermitian line bundle $B\rightarrow M$ with metric connection of curvature $-\frac
 i\hbar \omega$, the Hilbert space $\mathcal H=L^2(M,B)$ and
  \begin{equation}
         \hat f=-i\hbar \nabla_{X_f} +f\mathbf 1, \qquad f\in
         C^\infty(M).
  \end{equation}
 The Hamiltonian vector field $X_f=df^\sharp$ is defined by means of the
 musical isomorphism
  $$ \sharp :T^\ast M\rightarrow TM,\
  \eta(\phi^\sharp)=\omega^{-1}(\eta,\phi)=\omega^{ab}\eta_a\phi_b,$$
 and $\nabla$ is the connection on $B$. In a trivialization it is
 expressed as $\nabla=d-\frac i\hbar \theta$, where $\theta$ is a
 symplectic potential, obeying $d\theta=\omega$. $B$ is called a pre-quantum bundle.\\
 It turns out that the Hilbert space $L^2(M,B)$ is too large, as it
 contains functions of both $q$ and $p$ in the flat case, and gives rise to a reducible
 representation of the Heisenberg algebra. A method
 to reduce degrees of freedoms is needed, and this is provided by a
 polarization:

\begin{defi}[polarization]
    A polarization of a symplectic manifold $(M,\omega)$ is an
   integrable subbundle of $T_\mathbb CM$ (also called an involutive
   distribution), with the properties that
  \begin{enumerate}
   \item Every subspace $P_m\subset {T_m}_\mathbb C M$ is a Lagrange
    subspace, i.e. the symplectic complement
     $$ P_m^\perp =\{v\in {T_m}_\mathbb C M\ \big|\ \omega(v,p)=0\ \forall p\in
     P_m \}$$
    satisfies $P^\perp=P$. In particular this implies dim$_\mathbb C
    P$=dim$_\mathbb R M/2$.
   \item The involutive distribution $D=(P\cup \overline P)\cap TM$
    has constant real dimension $d$, which is called the real index of
    $P$.
   \item The distribution $E=(P+ \overline P)\cap TM$ is
   involutive.
 \end{enumerate}
 A function $f\in C^\infty(M)$ is called polarized if $\overline
 X(f)=0$ holds for any $X\in\Gamma(P)$. A symplectic potential
 $\theta$ is called adapted to $P$, if $\theta(\overline X)=0$ for
 $X\in \Gamma(P)$.
\end{defi}

 It is non-standard to include the requirement that $E$ be
 involutive into the definition, usually one speaks of a strongly
 integrable polarization \cite{woodhouse:quantisierung}, or a Nirenberg integrable subbundle
 \cite{gotay_NonexistencePol} in case it is satisfied. However, this condition is essential for our
 purposes, and implies the existence of an adapted symplectic
 potential. If $d=n=$ dim $M/2$ the polarization is called real, and
 is the complexification of a real subbundle of $TM$. At the
 other extreme there are the totally complex polarizations with $d=0$. They
 have the property $E=TM$, and any vector $X\in T_mM$ can be
 uniquely decomposed as $X=Z+\overline Z$, where $Z\in P_m$. Then
  $$ J:TM\rightarrow TM,\ Z+\overline Z\mapsto iZ-i\overline Z$$
 defines a complex structure on $TM$, which is compatible with
 $\omega$, i.e. is a symplectic transformation. $J$ in turn induces
 a semi-Riemannian metric
  $$ g(X,Y)=\omega(X,JY)$$
 on $TM$, which, if positive definite, defines a Kähler structure on
 $M$. On the other hand, every Kähler manifold has a canonical
 polarization: $P_m=\{X-iJX \ |\ X\in T_mM\}$. The main example
 of a real polarization is given by the vertical foliation of a
 cotangent bundle, where $P_m$ is the vertical, or fibre-parallel
 subspace of $T_m(T^\ast Q)$.

\begin{defi}[adapted coordinates]
 Let $P\subset T_\mathbb CM$ be a polarization, then in a neighborhood of any point of $M$ there are
 coordinates $q^i,p_j,z^\alpha$, where $q$, $p$ are real and $z$
 complex, such that $P$ is spanned by the $\partial_{p_j}$ and
 $\partial_{z^\alpha}$. Further, a standard form for $\omega$ can be
 obtained (\cite{woodhouse:quantisierung}, p. 97), which for a real polarization
 takes the form $\omega= dp_i\wedge dq^i$ (then $q,p$ are called Darboux coordinates), and for a Kähler
 polarization
  $$ \omega=i\frac{\partial^2 K}{\partial z^\alpha\partial \overline
  z^\beta} dz^\alpha \wedge d\overline z^\beta.$$
 The real function $K$ is called a Kähler potential. These coordinates
 are said to be adapted to $P$.
\end{defi}
 In the real case a polarized function depends only on the $q$s,
 whereas a polarized function on a Kähler manifold is holomorphic.

\begin{defi}[canonical bundle]
 Let $P$ be a polarization of $(M,\omega)$, dim $M=2n$, then the
 canonical bundle $K_P\rightarrow M$ is a complex line bundle over
 $M$, with fibre
  \begin{equation}
    (K_P)_m=\{\alpha\in \Lambda^n {T^\ast_m}_\mathbb C M\ \big|\
    \iota_{\overline X}\alpha =0 \ \forall X\in \Gamma(P)\}.
  \end{equation}
 $\iota_{\overline X}$ denotes contraction with the vector $\overline X$.
\end{defi}
 We also consider $K$ as a bundle over the space $LM$ of all nonnegative
 polarizations of $M$ (see \cite{woodhouse:quantisierung}), with fibre $K_P$ at $P\in
 LM$. A square root of $K\rightarrow LM$ is a line bundle
 $\delta\rightarrow LM$ satisfying
  $$ \delta^2:=\delta\otimes \delta=K.$$
 According to Kostant such a bundle exists iff $M$ admits a
 metaplectic structure \cite{kostant_symplecticspinors}.
 In the following we will assume that $M$ carries a metaplectic
 structure, and hence a square root $\delta$ of $K$. To a given
 (nonnegative) polarization $P$ we assign the so called half-form bundle
 $\delta_P\rightarrow M$, with fibre $(\delta_P)_m$ at $m\in M$.
 Here $LM$ is considered as a bundle over $M$, whose fibre at $m$
 consists of all polarizations of ${T_m}_\mathbb CM$.\\
 The Lie derivative of a vector field $X$ preserves sections of
 $K_P$ iff the flow of $X$ preserves the polarization $P$, which is
 equivalent to $[X,Y]\in \Gamma(P)$ for any $Y\in \Gamma(P)$. In
 this case the Lie derivative $\mathcal L_X$ is transferred from $K_P$ to $\delta_P$ by
  \begin{equation}
    2(\mathcal L_X\nu)\nu=\mathcal L_X\nu^2 ,\qquad \forall
    \nu\in\Gamma(\delta_P).
  \end{equation}
 On $K_P$ the partial connection $\nabla$ is defined by
  \begin{equation}
    \nabla_{\overline X}\beta=\iota_{\overline X}d\beta,\qquad
    \forall X\in\Gamma(P),\ \beta\in\Gamma(K_P),
  \end{equation}
 which is also transferred to $\delta_P$ by the requirement
 $2(\nabla_{\overline X}\nu)\nu=\nabla_{\overline X}\nu^2$. There is
 a natural sesquilinear product $(\cdot,\cdot)$ on $\delta_P$,
 taking values in the densities on $M/D$, for the construction see
 \cite{woodhouse:quantisierung}, p.230 (assuming that $M/D$ is a Hausdorff manifold).
 We consider the bundle $B_P=B\otimes \delta_P$, equipped with the
 partial connection induced by the connection on $B$ and the partial
 connection on $\delta_P$, and take as state space the
 set of 'polarized wave functions'
  \begin{equation}\label{geomHilbertSpace}
        \Gamma_P(B_P)=\{s\in \Gamma(B_P)\ \big|\
        \nabla_{\overline X}s=0 \ \forall X\in \Gamma(P)\}.
  \end{equation}
 The prescription
  \begin{equation}
        \langle \psi\otimes \mu, \phi\otimes \nu\rangle=\int
        _{M/D}(\psi,\phi)(\mu,\nu)
  \end{equation}
 defines an inner product on $\Gamma_P(B_P)$. Our Hilbert space
 $\mathcal H_P$ is the set of square integrable elements of
 $\Gamma_P(B_P)$. The quantization of an observable $f$ becomes
  \begin{equation}\label{geomOperatorDefi}
    \tilde f(\psi\otimes\nu) =(\hat f\psi)\otimes \nu -i\hbar \psi\otimes \mathcal L_{X_f}\nu,
  \end{equation}
 in case $X_f$ preserves the polarization. For $P$ real, these are functions of the form
  \begin{equation}\label{ObsAffLinReell}
        f(q,p)=a^i(q)p_i+b(q),
  \end{equation}
 in adapted coordinates, whereas on Kähler manifolds with holomorphic polarization the condition
 implies
  $$ f(z,\overline z)=u^a(z)\frac{\partial K}{\partial z^a}+ v(z).$$
\begin{bsp}
 On a cotangent bundle $M=T^\ast Q$ over a (semi-)Riemannian manifold $(Q,g)$, with
 coordinates $q$ on $Q$ and $(q,p=dq)$ on $M$, an adapted potential
 is given by $\theta=p_idq^i$. In this gauge, wave functions can be expressed
 as $\psi\otimes \sqrt{\mu}$, where $\psi=\psi(q)$ is polarized, and
 $\mu=\sqrt{|\det g(q)|}d^n q$.
 Then we have for $f$ as in (\ref{ObsAffLinReell}):
  \begin{equation}
   \tilde f(\psi\otimes \sqrt\mu)=\frac \hbar i\Big(a^j\partial_j
   \psi + \big(b+\textstyle{\frac
   12}\text{div}(a)\big)\psi\Big)\otimes \sqrt\mu,
  \end{equation}
 where div$(a)$ is defined by $\mathcal
 L_a\mu=d(\iota_a\mu)=:$ div$(a)\mu$. In particular this gives
 (neglecting $\sqrt \mu$)
  \begin{equation}\label{kanOpsGeomQuant}
   \tilde q^a\psi(q)=q^a\psi(q), \quad \text{and} \quad \tilde
   p_a\psi (q)=-i\hbar \big(\partial_{a}+\textstyle{\frac
   14}g^{bc}\partial_ag_{bc}\big) \psi(q).
  \end{equation}
 A monomial $q^{i_1}\dots q^{i_k}p_j$ is mapped to
 the symmetrized product of the operators $\tilde
 q^{i_1},\dots,\tilde q_{i_k},\tilde p_j$, reproducing the Weyl ordering convention.
\end{bsp}
\begin{bsp}
 On a Kähler manifold with adapted potential $\theta=-i\frac {\partial
 K}{\partial z^a}dz^a$ and half-form $\nu=\sqrt{d^nz}$, wave functions are holomorphic, and the
 metric on the pre-quantum bundle is $\langle
 \psi,\phi\rangle=e^{- K/\hbar}\overline \psi\phi$. One gets $\tilde z^a
 \psi(z)=z^a\psi(z)$, and the function $f(z,\overline
 z)=\frac{\partial K}{\partial z^a}$ is quantized as:
  \begin{equation}
   \tilde f\psi(z) =\hbar \partial_{a}\psi(z).
  \end{equation}
 Again, an observable linear in $\partial_{z^a}K$ gives rise to a
 symmetrized operator.
\end{bsp}

\section{Deformation quantization}\label{DefoQuant}
 The purpose of deformation quantization is to construct a star
 product on $C^\infty(M)$, i.e. a deformation of the algebra of
 classical observables on a symplectic manifold. Our main references
 are \cite{fedosov:deformationart,fedosov:deformationbook}. The following conditions are supposed to
 hold, where $\hbar$ is considered as a formal deformation
 parameter \cite{bayenundvieleandere1, dito-sternheimer_deformueberblick}:
  \begin{enumerate}
    \item the coefficients $c_k$ of the product of
    $f=\sum_k\hbar^kf_k$ and $g=\sum_k\hbar^k g_k$:
        $$ c = f \ast g = \sum_{k=0}^\infty
        \hbar^kc_k(f,g)$$
      are bi-differential operators (of finite order).
    \item $c_0(x)=f_0(x)g_0(x)$, i.e. $\ast$ is a deformation
      of the ordinary pointwise product on $C^\infty(M)$.
    \item the correspondence principle
      $$ [f,g]_\ast =f\ast g-g\ast f = i\hbar \{f_0,g_0\} +
      O(\hbar^2), $$
     holds, where $\{f,g\}=\omega^{ab}\partial_a f\partial_bg$ denotes the poisson bracket defined by
     $\omega$.
  \end{enumerate}
 Fedosovs construction of $\ast$ makes use of a bijection from
 $C^\infty(M)$ to the set of flat, smooth sections of a vector bundle, we
 give a brief review of the method. \\
 First we need a symplectic (torsion-free) connection on the tangent bundle $TM$,
 i.e. a covariant derivative $\nabla$ respecting $\omega$:
  $$ d\big(\omega(X,Y)\big)=\omega(\nabla X,Y)+\omega(X,\nabla Y),$$
 or $\nabla \omega=0$ for the induced connection on $T^\ast M\otimes
 T^\ast M$. Such a connection always exists, but contrary to the
 Riemannian (symmetric) case it is not unique. We will later comment
 on the correct choice of $\nabla$. In local Darboux coordinates it takes the form
  $ \nabla=d+\Gamma,$
 where the connection form $\Gamma$ is a 1-form with values in the symplectic Lie
 algebra, i.e. $\Gamma\in \Omega^1\big(U;\mathfrak{sp}(2n)\big)$.
 As usual, its components are defined by $\Gamma^k_{ij}e_k=
 \Gamma(e_i)\cdot e_j$, where $e_i$ ($i=1,\dots,2n$) is the local Darboux
 basis of $TM$, $e_i=\partial_{q^i}$ and
 $e_{n+i}=\partial_{p^i}$ for $i=1,\dots,n$. Further we set
 $\Gamma_{ijk}:=\omega_{il}\Gamma^l_{jk}$, which is totally
 symmetric in its three indices due to $\Gamma$ being torsion-free
 and symplectic. The components of the curvature tensor are
    \begin{equation*}
        {R^i}_{jkl} =
        \partial_k\Gamma^i_{lj}-\partial_l\Gamma^i_{kj} +
        \Gamma^i_{km} \Gamma ^m_{lj} - \Gamma^i_{lm} \Gamma ^m_{kj},
  \end{equation*}
 and $R_{ijkl}=\omega_{im}{R^m}_{jkl}$ is symmetric in the first two
 (Lie algebra) indices, and antisymmetric in the last two
 (differential form) indices. $\nabla$ also induces a connection on
 $T^\ast M$, with connection form $-\Gamma^T$, and on the whole
 tensor algebra through the Leibniz formula. Particularly we are interested in the symmetric algebra
 $\mathcal W:=$Sym$_\mathbb C(T^\ast M)$; instead of the common $dx^i$ we use the symbols
 $y^i$ to denote the local basis of $T^\ast M$, and suppress the
 symmetric tensor product sign: $y^iy^j=y^jy^i:=\frac 12(y^i\otimes y^j+y^j\otimes y^i)$.
 In some cases we need the product bundle $\mathcal W\otimes \Lambda:=\mathcal W\otimes
 \Lambda(T^\ast M)$ and $\Omega(\mathcal W)=\Gamma(\mathcal W\otimes \Lambda)$,
 where the basis of $T^\ast M$ considered as a subset of
 $\Lambda (T^\ast M)$ is denoted by $dx^i$.
 Now we introduce a new product $\circ$ on $W_m=$Sym$_\mathbb C(T_m^\ast M)$, through the relations
  \begin{equation}\label{SympCliffRels}
     [v,w]_\circ=v\circ w-w\circ v=i\hbar \omega^{-1}(v,w),\quad
      v\circ w+w\circ v = 2vw,\quad \forall v,w\in T^\ast_m M.
  \end{equation}
 Then we have $[y^i,y^j]_\circ=i\hbar\omega^{ij}$, so that the $y^i$
 can be identified as local position and momentum operators.
 $(W_m,\circ)$ is called the Weyl algebra associated to
 $(T^\ast_m M,\omega_m^{-1})$, it is the symplectic analog of a Clifford algebra. On
 $\Omega(\mathcal W)$ we write $\circ $ for $\circ \otimes \wedge$ and use the graded commutator
  $$[\xi,\eta] = \xi\circ \eta - (-1)^{pq}\eta\circ \xi$$
 for $\xi\in \Omega^q(\mathcal W)$ and
 $\eta\in \Omega^p(\mathcal W)$.
 The connection is extended to $\Omega(\mathcal W)$ by
    \begin{align*}
    \nabla (\xi\circ \eta) &=\nabla \xi\circ\eta + (-1)^q\xi
    \circ\nabla\eta, \qquad \xi \in\Gamma(\mathcal W\otimes\Lambda^q) \\
    \nabla (\phi\wedge \eta ) &=d\phi \wedge \eta + (-1)^q \phi\wedge
     \nabla \eta ,\qquad \phi\in \Omega^q(M).
  \end{align*}
 Explicitly we have
  $$ \nabla y^{i_1}\dots y^{i_k} = -\sum_j \Gamma^{i_j}_{ab}
  y^{i_1}\dots \breve y^{i_j} \dots y^{i_k}y^a dx^b,$$
 where $\breve y^{i_j}$ means omitting the element, and which can be written in the form
  \begin{equation}
    \nabla  = d+ [dU(\Gamma),\cdot] := d-\frac i{2\hbar}
    \Gamma_{ijk}[y^iy^j,\cdot]dx^k.
  \end{equation}
 Here $dU$ is the isomorphism between the symplectic and
 metaplectic Lie algebras \cite{folland:harmonic_analysis_in_ps}:
  $$ dU: \mathfrak{sp}(2n)\rightarrow \mathfrak{mp}(2n),\ A\mapsto
  -\frac i{2\hbar}\omega_{ij}{A^j}_k y^iy^k.$$
 Now we introduce two further operators on $\Omega(\mathcal W)$:
  \begin{equation}
    \delta = dx^k\wedge \frac{\partial}{\partial y^k},\qquad
    \delta^\ast= y^k \iota\Big(\frac{\partial}{\partial x^k}\Big),
  \end{equation}
 where $y^k$ denotes (commutative) multiplication with $y^k$, and
 the contraction $\iota\Big(\frac{\partial}{\partial x^k}\Big)$ acts
 only on the form part. In brief, $\delta$ replaces one of the $y^i$ by $dx^i$,
 whereas $\delta^\ast$ replaces $dx^i$ by $y^i$. One easily checks
\begin{lem}\
 \begin{enumerate}
   \item $\delta^2=(\delta^\ast)^2=0$
   \item Applied to $ y^{i_1}\dots y^{i_l} dx^{j_1}\wedge\dots \wedge dx^{j_p}
   $ the following identity holds:
     \begin{equation}
        \delta \delta^\ast+ \delta^\ast \delta =( l + p)id.
     \end{equation}
 \end{enumerate}
\end{lem}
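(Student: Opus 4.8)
The plan is to treat $\delta$ and $\delta^\ast$ as the analogues of a creation/annihilation pair and to reduce both assertions to the elementary commutation rules between multiplication and differentiation on $\mathrm{Sym}_{\mathbb C}(T^\ast M)$ on the one hand, and between exterior multiplication and contraction on $\Lambda(T^\ast M)$ on the other. Throughout, the point to keep in mind is that $\partial/\partial y^k$ and multiplication by $y^k$ act only on the symmetric factor and have form-degree zero, so they commute (no sign) with the operators $dx^l\wedge$ and $\iota(\partial_{x^l})$ acting on the form factor.

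For part~(1), note that $\delta$ turns a factor $y^i$ into a factor $dx^i$, so on any homogeneous element $\delta^2$ produces, schematically, $dx^k\wedge dx^l\,\frac{\partial^2}{\partial y^k\,\partial y^l}$; since the $y^i$ commute the operator $\partial^2/\partial y^k\partial y^l$ is symmetric in $(k,l)$, while $dx^k\wedge dx^l$ is antisymmetric, so the contracted sum vanishes. The identity $(\delta^\ast)^2=0$ is the exact mirror image: $(\delta^\ast)^2$ produces $y^k y^l\,\iota(\partial_{x^k})\iota(\partial_{x^l})$, with $y^k y^l$ symmetric in $(k,l)$ and $\iota(\partial_{x^k})\iota(\partial_{x^l})$ antisymmetric. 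No real computation is involved.

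For part~(2), I would evaluate $\delta\delta^\ast$ and $\delta^\ast\delta$ directly on a monomial of the stated type, using the two rules
$$\frac{\partial}{\partial y^k}\big(y^l\,\cdot\big)=\delta_k^l+y^l\,\frac{\partial}{\partial y^k},\qquad \iota\Big(\frac{\partial}{\partial x^k}\Big)\big(dx^l\wedge\cdot\big)=\delta_k^l-dx^l\wedge\iota\Big(\frac{\partial}{\partial x^k}\Big),$$
together with the commutativity of the symmetric-factor and form-factor operators. Expanding $\delta\delta^\ast=dx^k\wedge\frac{\partial}{\partial y^k}\,y^l\,\iota(\partial_{x^l})$ and $\delta^\ast\delta=y^l\,\iota(\partial_{x^l})\,dx^k\wedge\frac{\partial}{\partial y^k}$ with these rules, the two ``off-diagonal'' contributions — the ones still carrying a $\partial/\partial y$ together with a $y$ and a $\wedge$ together with an $\iota$, in transposed order — occur with opposite signs and cancel, leaving
$$\delta\delta^\ast+\delta^\ast\delta=y^k\frac{\partial}{\partial y^k}+dx^k\wedge\iota\Big(\frac{\partial}{\partial x^k}\Big).$$
The first summand is the Euler operator measuring the degree in the $y$'s, hence acts as multiplication by $l$ on $y^{i_1}\cdots y^{i_l}$; the second measures the form degree, hence acts as multiplication by $p$ on $dx^{j_1}\wedge\cdots\wedge dx^{j_p}$. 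Adding gives $(l+p)\,\mathrm{id}$, and by linearity this extends to all of the corresponding bidegree component of $\Omega(\mathcal W)$.

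The only step calling for genuine care — and the only place a wrong sign could enter — is the bookkeeping in part~(2): one must fix an ordering convention once and for all, use repeatedly that the $y$-sector operators cross exterior factors without a sign, and track the single true anticommutator $\iota(\partial_{x^k})\,dx^l\wedge + dx^l\wedge\,\iota(\partial_{x^k})=\delta_k^l$, which is exactly what manufactures the diagonal term. So the ``main obstacle'' here is purely notational; there is no conceptual difficulty, which is why the statement can reasonably be recorded as something one checks by hand.
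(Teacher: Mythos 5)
Your proof is correct, and it is precisely the direct verification the paper leaves to the reader (the lemma is introduced with ``One easily checks'' and no proof is given): the symmetric-factor and form-factor operators commute, $\delta^2$ and $(\delta^\ast)^2$ vanish by the symmetric-times-antisymmetric contraction, and the anticommutator reduces to the Euler operator $y^k\partial_{y^k}$ plus the form-degree operator $dx^k\wedge\iota(\partial_{x^k})$, giving $(l+p)\,\mathrm{id}$. Nothing further is needed.
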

 We also define $\delta^{-1}$ by
 \begin{equation}
        \delta^{-1} = \frac 1{l+p}\delta^\ast
 \end{equation}
 for $l+p>0$, and $\delta^{-1}=0$ otherwise. If the projection of an
 element $\xi \in\Omega(\mathcal W)$ to its part in
 $C^\infty(M)\subset\Omega(\mathcal W)$ is denoted by $\xi_{00}$, then
 the following decomposition holds, analogously to the Hodge-de
 Rahm decomposition of forms:
  \begin{equation}\label{decompFed}
    \xi = \delta\delta^{-1} \xi + \delta^{-1}\delta \xi + \xi_{00}.
  \end{equation}
 From now on we consider elements of $\Omega(\mathcal W)$ as formal series
 in $\hbar$ (replace $\mathcal W=$Sym$_\mathbb C(T^\ast M)$ by $\mathcal
 W = $Sym$_\mathbb C(T^\ast M)[[\hbar]]$),
 and also allow for an infinite number of homogeneous elements in
 the $y^i$. Then elements of $\Omega(\mathcal W)$ have the expansion
 $$    \xi = \sum_{k,l,m\geq 0} \hbar^k \xi_{k; i_1,\dots,i_l,j_1,\dots,j_m}
   y^{i_1} \dots y^{i_l}dx^{j_1}\wedge \dots\wedge dx^{j_m}.$$
\begin{defi}
 The $\hbar$-degree of a homogeneous element
  \begin{equation}
    \sum_m \hbar^k \xi_{k; i_1,\dots,i_l,j_1,\dots,j_m}
   y^{i_1} \dots y^{i_l}dx^{j_1}\wedge \dots\wedge dx^{j_m}
  \end{equation}
 (no sum over $k,l$) is defined as $k+l/2$.
\end{defi}
 Due to the relations (\ref{SympCliffRels}) $\circ$ respects the
 gradation defined by the $\hbar$-degree. The subspaces of
 homogeneous elements of $\hbar$-degree $j$ are denoted by $\mathcal
 W_j$. Now $\mathcal W$ and $\Omega(\mathcal W)$
 carry two gradations, the other one being defined by the degree in
 Sym$_\mathbb C(T^\ast M)$, which is respected by the symmetric tensor
 product, but not by $\circ$. The corresponding projections are
  \begin{equation}\label{GradProjektionen}
        \pi_\hbar^j :\mathcal W \rightarrow \mathcal W_j,\qquad \text{und}\qquad
         \pi_\otimes^k: \mathcal  W \rightarrow \text{Sym}^k_\mathbb C(T^\ast
         M)[[\hbar]],
  \end{equation}
 we also adopt the convention to denote $\pi_\otimes^0(\xi)$ as
 $\xi_0$. It is important to note that $\delta$
 decreases the $\hbar$-degree by $1/2$, whereas $\delta^{\ast}$
 increases it.

 Now we come to the construction of a flat connection
 on $\mathcal W$, with curvature $\Omega=\frac
 i\hbar[\omega,\cdot]=0$. Making the ansatz
  \begin{equation}
    D=\nabla +\frac i\hbar[\gamma,\cdot]= d+ [dU(\Gamma)+
    \frac i\hbar\gamma,\cdot]
  \end{equation}
 with an as yet undetermined 1-form $\gamma\in \Omega^1(\mathcal
 W)$, we obtain for the curvature $\Omega=D^2=\frac i\hbar[\tilde \Omega,\cdot]$,
 with
  \begin{equation}
   \tilde \Omega=\frac \hbar idU(R)+ \nabla \gamma + \frac
  i\hbar\gamma^2.
  \end{equation}
 Here $R$ denotes the curvature 2-form of $\nabla$, $dU(R) =-\frac
 i{4\hbar} R_{ijkl}y^iy^jdx^k\wedge dx^l$, and $\gamma^2=\gamma\circ\gamma$. Of course, $\gamma$ is
 only determined up to addition of a scalar form. We require the normalization
 $\gamma_0=0.$ In the flat case
 $M=\mathbb R^{2n}$ with standard symplectic form $\omega=dp_i\wedge dq^i$ we can choose
 $\Gamma=0$, and $\gamma=\omega_{ab}y^bdx^a$ leads to
 $\tilde\Omega=\frac i\hbar \omega$. Therefore, in the general case we
 split $\gamma$ as
  \begin{equation}
   \gamma=\omega_{ab}y^bdx^a + r.
  \end{equation}
 Observing that $\delta$ can be written in the form $\delta \xi =
 -\frac i\hbar \omega_{ab}dx^a[y^b,\xi]$, we obtain for the curvature
    $$  \tilde\Omega= \omega + \frac \hbar i dU(R)- \delta r
          +\nabla r +\frac i\hbar r^2,$$
 so that $\tilde \Omega=\frac i\hbar \omega$ becomes equivalent to
  \begin{equation}\label{abelscherZshgBedg}
    \delta r = \hat R +\nabla r + \frac
     i\hbar r^2,
  \end{equation}
 where $\hat R=\frac \hbar idU(R)=-\frac 14R_{ijkl}y^iy^jdx^k\wedge
 dx^l$ has been introduced.

 \begin{theo}[Fedosov]
  Under the condition $\delta^{-1}r=0$, eq.
  (\ref{abelscherZshgBedg}) has exactly one solution $r$.
 \end{theo}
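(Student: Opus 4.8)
The plan is to follow Fedosov's successive-approximation argument: rewrite (\ref{abelscherZshgBedg}) as a fixed-point equation whose solvability, uniqueness and compatibility with the original equation are controlled by the $\hbar$-degree filtration. First I would apply $\delta^{-1}$ to (\ref{abelscherZshgBedg}). Since $r=\gamma-\omega_{ab}y^bdx^a$ is a $1$-form it has vanishing scalar part, $r_{00}=0$, and together with the hypothesis $\delta^{-1}r=0$ the decomposition (\ref{decompFed}) gives $r=\delta^{-1}\delta r$. Substituting $\delta r=\hat R+\nabla r+\tfrac i\hbar r^2$ turns the equation into
\begin{equation*}
  r=L(r),\qquad L(r):=\delta^{-1}\hat R+\delta^{-1}\!\bigl(\nabla r+\tfrac i\hbar r^2\bigr).
\end{equation*}
Conversely, any fixed point of $L$ automatically satisfies $\delta^{-1}r=0$, because $(\delta^{-1})^2=0$. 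It therefore suffices to show that $L$ has exactly one fixed point and that this fixed point solves (\ref{abelscherZshgBedg}).

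For the fixed point, observe that $\hat R=-\tfrac14R_{ijkl}y^iy^jdx^k\wedge dx^l$ has $\hbar$-degree $1$, hence $\delta^{-1}\hat R$ has $\hbar$-degree $\tfrac32$; moreover $\nabla$ preserves the $\hbar$-degree, $\delta^{-1}$ raises it by $\tfrac12$, and $\tfrac i\hbar[\xi,\eta]_\circ$ has $\hbar$-degree at least $\deg\xi+\deg\eta-1$ because $\circ$ respects the $\hbar$-grading. Consequently $L$ sends the set of elements of $\hbar$-degree $\ge\tfrac32$ into itself, and if $r$ and $r'$ agree up to terms of $\hbar$-degree $>N$ then $L(r)$ and $L(r')$ agree up to terms of $\hbar$-degree $>N+\tfrac12$. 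Since $\Omega(\mathcal W)$ is complete for the $\hbar$-degree filtration, the iteration $r^{(0)}=0$, $r^{(n+1)}=L(r^{(n)})$ stabilizes in every fixed $\hbar$-degree and converges to a well-defined $r$ of $\hbar$-degree $\ge\tfrac32$, which is the unique fixed point of $L$. Combined with the first step, this already gives uniqueness of the solution of (\ref{abelscherZshgBedg}).

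It remains to check that this $r$ indeed solves (\ref{abelscherZshgBedg}). Put $B:=\delta r-\hat R-\nabla r-\tfrac i\hbar r^2$; it is a $2$-form, so $B_{00}=0$. Applying $\delta$ to $r=L(r)$, using (\ref{decompFed}) on the $2$-form $\hat R+\nabla r+\tfrac i\hbar r^2$ and $\delta^2=0$, one finds $B=\delta^{-1}\delta B$, hence also $\delta^{-1}B=0$. A direct computation from the structural identities of the Weyl bundle — $\delta^2=0$, $\nabla\delta+\delta\nabla=0$, $\nabla^2=\tfrac i\hbar[\hat R,\cdot]$, the algebraic and differential Bianchi identities $\delta\hat R=0$ and $\nabla\hat R=0$, the graded Leibniz rules for $\delta$ and $\nabla$ with respect to $\circ$, and $[r\circ r,r]=0$ — yields the Bianchi-type identity $\delta B=\nabla B+\tfrac i\hbar[r,B]$. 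Substituting this into $B=\delta^{-1}\delta B$ gives $B=\delta^{-1}\bigl(\nabla B+\tfrac i\hbar[r,B]\bigr)$; since $\delta^{-1}$ raises the $\hbar$-degree and $r$ has $\hbar$-degree $\ge\tfrac32>1$, the right-hand side has strictly larger minimal $\hbar$-degree than $B$ itself, which forces $B=0$.

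I expect the only genuinely computational point to be the derivation of $\delta B=\nabla B+\tfrac i\hbar[r,B]$, where one must keep the graded-commutator signs straight while invoking both Bianchi identities and $\nabla^2=\tfrac i\hbar[\hat R,\cdot]$; everything else is bookkeeping in the $\hbar$-degree filtration. The one point that must not be glossed over is that each degree estimate in the last two paragraphs is \emph{strict} — it is strictness that simultaneously produces convergence of the iteration and the vanishing of $B$.
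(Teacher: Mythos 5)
Your argument is essentially the paper's own proof: apply $\delta^{-1}$ to (\ref{abelscherZshgBedg}), use $r_{00}=0$ and $\delta^{-1}r=0$ in the decomposition (\ref{decompFed}) to get the fixed-point equation, solve it by iteration using that $\nabla$ preserves while $\delta^{-1}$ strictly raises the $\hbar$-degree, and recover the normalization from $(\delta^{-1})^2=0$. The only difference is that you also sketch, correctly and along Fedosov's original lines (the identity $\delta B=\nabla B+\tfrac i\hbar[r,B]$ from the Bianchi identities and $\nabla^2=\tfrac i\hbar[\hat R,\cdot]$, followed by the strict degree estimate forcing $B=0$), the verification that the fixed point really solves (\ref{abelscherZshgBedg}), a step the paper simply delegates to Fedosov's texts.
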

\begin{proof}[Sketch of proof]
 For a 1-form we have $r_{00}=0$, together with the condition
 $\delta^{-1}r=0$ this implies that the decomposition (\ref{decompFed}) takes
 the form $r=\delta^{-1}\delta r$. Applying $\delta^{-1}$ to
 (\ref{abelscherZshgBedg}) leads to
   \begin{equation}\label{abelscherZshgBedg2}
     r = \delta^{-1} \hat R +\delta^{-1} \Big(\nabla r +
     \frac i\hbar r^2\Big).
   \end{equation}
 As $\nabla$ preserves the $\hbar$-filtration, whereas $\delta^{-1}$
 increases the degree by 1/2, it follows by iteration that
 (\ref{abelscherZshgBedg2}) has exactly one solution (as a formal
 power series in $\hbar$ and $y^i$). The iteration steps are
  \begin{equation}
    r^{(3)}=\delta^{-1}\hat R,\quad r^{(n+1)}=\delta^{-1}\hat R
    +\delta^{-1}\big(\nabla r^{(n)}+\frac i\hbar (r^{(n)})^2\big),
  \end{equation}
 and $r=\lim_{n\rightarrow \infty} r^{(n)}$. The condition
 $\delta^{-1}r=0$ is fulfilled due to $(\delta^{-1})^2=0$. For the
 proof that $r$ indeed solves (\ref{abelscherZshgBedg}) we refer to
 Fedosovs texts \cite{fedosov:deformationart,fedosov:deformationbook}.
\end{proof}
 The first iteration steps are
  \begin{align}\label{curvIterations}
    r^{(3)} &= \delta^{-1}\hat R = -\frac 1{16}R_{ijkl}
      y^iy^j(y^kdx^l - y^ldx^k) \nonumber\\
     &= -\frac 18 R_{ijkl}y^iy^jy^k dx^l\\
    \nabla \delta^{-1}\hat R &= -\frac 18\nabla_m R_{ijkl} y^iy^jy^k
      dx^m\wedge dx^l \nonumber\\
    \delta^{-1}\nabla r^{(3)}&= -\frac 1{40} y^n\nabla_m R_{ijkl} y^iy^jy^k
       \big(\delta_{m,n}dx^l - \delta_{n,l}dx^m\big) \nonumber\\
     &= \frac 1{40}\Big(\nabla_m R_{ijkl} - \nabla_l R_{ijkm} +
       R_{ijkn} \Gamma^n_{ml}\Big) y^iy^jy^ky^l dx^m \nonumber\\
    r^{(4)} &= -\frac 1{40}\tilde\nabla_m R_{ijkl}y^iy^jy^ky^mdx^l
    +\mathcal O(\hbar^{5/2}),\nonumber
  \end{align}
 where $\tilde\nabla$ is the product connection on $\mathcal
 W\otimes \Lambda$, thus acts the same way on $y^i$ and $dx^i$. In
 the next to last line, $\nabla$ also acts on the $y^i$ outside the brackets, and
 the term $\nabla_m R_{ijkl}y^iy^jy^ky^ldx^m$ vanishes due to the antisymmetry of $R_{ijkl}$ under exchange of
 $k$ and $l$.\\
 We have thus constructed a (formal) flat connection on $\mathcal
 W$, and want to identify the quantum operators with the set of flat sections of
 $\mathcal W$ with respect to $D$, i.e. those satisfying $D\hat
 f=0$. As $D=\nabla- \delta+\frac i\hbar [r,\cdot]$, this equation can be written
 in the form
  \begin{equation}\label{flacheObservBedg}
        \delta \hat f=\nabla \hat f + \frac i\hbar [r,\hat f].
  \end{equation}
 We denote the set of flat sections by $\Gamma_D(\mathcal W)$.
\begin{theo}[Fedosov]
 To every $f\in C^\infty(M)[[\hbar]]$ there is exactly one $\hat
 f\in \Gamma_D(\mathcal W)$ such that $\hat f_0=f$.
\end{theo}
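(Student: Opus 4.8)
The plan is to follow the same fixed-point scheme used above to construct $r$. By (\ref{flacheObservBedg}), a section $\hat f\in\Gamma(\mathcal W)$ lies in $\Gamma_D(\mathcal W)$ exactly when $\delta\hat f=\nabla\hat f+\frac i\hbar[r,\hat f]$. First I would apply $\delta^{-1}$ to this equation. Since $\hat f$ carries no differential-form part, $\delta^\ast\hat f=0$, hence $\delta^{-1}\hat f=0$ and $\delta\delta^{-1}\hat f=0$, while $\hat f_{00}=\hat f_0$. Therefore the decomposition (\ref{decompFed}) collapses to $\hat f=\delta^{-1}\delta\hat f+\hat f_0$, and imposing the normalization $\hat f_0=f$ turns the flatness condition into the fixed-point equation
\begin{equation}\label{fixpunktObs}
 \hat f=f+\delta^{-1}\Big(\nabla\hat f+\frac i\hbar[r,\hat f]\Big).
\end{equation}

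Next I would solve (\ref{fixpunktObs}) by iteration: $\hat f^{(0)}=f$ and $\hat f^{(n+1)}=f+\delta^{-1}\big(\nabla\hat f^{(n)}+\frac i\hbar[r,\hat f^{(n)}]\big)$. The decisive point is a count in the $\hbar$-degree: $\nabla$ preserves the $\hbar$-degree, $\delta^{-1}$ raises it by $1/2$, and --- since $\circ$ respects the $\hbar$-grading while $r$ has $\hbar$-degree $\geq 3/2$, as already $r^{(3)}=\delta^{-1}\hat R$ does --- the operator $\frac i\hbar[r,\cdot]$ raises the $\hbar$-degree by at least $1/2$. Hence $\delta^{-1}\nabla$ raises it by exactly $1/2$ and $\delta^{-1}\frac i\hbar[r,\cdot]$ by at least $1$, so the map $T\colon\xi\mapsto f+\delta^{-1}(\nabla\xi+\frac i\hbar[r,\xi])$ changes $\xi$ only in strictly higher $\hbar$-degree and is a contraction for the $\hbar$-degree filtration. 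Consequently the $\hat f^{(n)}$ stabilize in every fixed $\hbar$-degree, $\hat f=\lim_n\hat f^{(n)}$ exists as a formal series in $\hbar$ and the $y^i$, it is the unique solution of (\ref{fixpunktObs}), and $\hat f_0=f$ by construction; the leading correction is $\hat f=f+y^a\partial_a f_0$ modulo terms of $\hbar$-degree at least $1$.

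It remains to check that a solution of (\ref{fixpunktObs}) is genuinely $D$-flat and not only a solution of the $\delta^{-1}$-projected equation. Put $g:=D\hat f\in\Omega^1(\mathcal W)$. On one hand $Dg=D^2\hat f=\frac i\hbar[\tilde\Omega,\hat f]=-\frac1{\hbar^2}[\omega,\hat f]=0$, because $\tilde\Omega=\frac i\hbar\omega$ by the construction of $r$ and the scalar two-form $\omega$ is $\circ$-central. On the other hand, inserting (\ref{fixpunktObs}) and using $\delta f=0$ gives $g=\delta^{-1}\delta\big(\nabla\hat f+\frac i\hbar[r,\hat f]\big)$, whence $\delta^{-1}g=0$ by $(\delta^{-1})^2=0$; since $g$ is a one-form, $g_{00}=0$, and (\ref{decompFed}) gives $g=\delta^{-1}\delta g$. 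Rewriting $Dg=0$ as $\delta g=\nabla g+\frac i\hbar[r,g]$ and applying $\delta^{-1}$ yields the homogeneous equation $g=\delta^{-1}(\nabla g+\frac i\hbar[r,g])$, which by the same contraction argument forces $g=0$. Thus $\hat f\in\Gamma_D(\mathcal W)$, and uniqueness is already part of the argument.

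I expect the main obstacle to be precisely the bookkeeping in the second paragraph: one must ensure that \emph{both} $\delta^{-1}\nabla$ and $\delta^{-1}\frac i\hbar[r,\cdot]$ strictly raise the $\hbar$-degree, which rests on $\circ$ respecting the $\hbar$-grading, on $\nabla$ preserving it, and on $r$ beginning in $\hbar$-degree $3/2$ --- all of which are supplied by the construction of $r$ and equation (\ref{abelscherZshgBedg}) in the preceding theorem. Once this is in place, existence, uniqueness and flatness follow as above; for a fully detailed verification one may also consult Fedosov \cite{fedosov:deformationart,fedosov:deformationbook}.
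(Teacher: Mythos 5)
Your argument is correct and follows exactly the paper's route: project the flatness condition (\ref{flacheObservBedg}) with $\delta^{-1}$ using the decomposition (\ref{decompFed}), and solve the resulting fixed-point equation by iteration, with uniqueness coming from the $\hbar$-degree count ($\delta^{-1}\nabla$ raises the degree by $1/2$, $\delta^{-1}\tfrac i\hbar[r,\cdot]$ by at least $1$ since $r$ starts in degree $3/2$). Your third paragraph moreover correctly supplies the step the paper explicitly defers to Fedosov's texts --- that a solution of the projected equation is genuinely $D$-flat, via $g=D\hat f$, $Dg=D^2\hat f=0$ by centrality of the scalar curvature form, and the homogeneous fixed-point equation forcing $g=0$ --- so your proof is, if anything, more complete than the sketch given here.
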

\begin{proof}[Sketch of proof]
 For a 0-form $\hat f$ we have $\delta^{-1}\hat f=0$ and $\hat f_{00}=\hat f_0$,
 so that the decomposition (\ref{decompFed}) becomes $\hat f=\hat f_0 + \delta^{-1}\delta \hat
 f$. Then equation (\ref{flacheObservBedg}) implies
  \begin{equation}
   \hat f=\hat f_0+\delta^{-1}\big(\nabla \hat f+\frac i\hbar[r,\hat
   f] \big).
  \end{equation}
 Again this equation has a unique solution, which can be determined by
 iteration:
  \begin{equation}\label{OperatorIterationGlg}
   \hat f^{(0)}=\hat f_0=f,\quad \hat f^{(n+1)}=\hat
   f_{0}+\delta^{-1} \big(\nabla \hat f^{(n)}+\frac i\hbar [r,\hat
   f^{(n)}]\big).
  \end{equation}
 For the proof that $\hat f$ indeed solves (\ref{flacheObservBedg}) we
 again refer to \cite{fedosov:deformationart,fedosov:deformationbook}.
\end{proof}

 The first iterations give (we always determine $\hat f^{(n)}$
 up to order $\hbar^{n/2}$ only, because higher order terms are not
 stable under iteration):
  \begin{align}\label{OperatorIteration}
    \hat f^{(1)} + \mathcal O(\hbar^1)&=f+\delta^{-1}\nabla f = f+y^k\nabla_k f = f+\partial_k fy^k\nonumber \\
    \delta^{-1} \nabla \partial_k fy^k&= \delta^{-1}
    \big(dx^j\nabla_jy^k \nabla_k f\big)=\frac 12 y^j\nabla_j
      y^k\nabla_k f \nonumber \\
     \hat f^{(2)} &= f+\partial_kfy^k + \frac 12\nabla_j\partial_k fy^jy^k
      +\mathcal O(\hbar^{3/2})\nonumber\\
    [r^{(0)},\hat f^{(1)}] &= -\frac 18 R_{abcd}\partial_k f[y^ay^by^c,y^k]
       dx^d\\
     &= -\frac {i\hbar}8 R_{abcd} \partial_k f\big( y^ay^b\omega^{ck} +
        y^ay^c\omega^{bk} + y^by^c\omega^{ak} \big)dx^d \nonumber\\
     \frac i\hbar \delta^{-1}[r^{(0)},f^{(1)}] &= \frac
     1{24}R_{abcd}\omega^{ck}\partial_k f y^ay^by^d \nonumber\\
    \hat f^{(3)} &= f+y^k\nabla_kf + \frac 12y^j\nabla_jy^k\nabla_k
       f + \frac 16 y^i\nabla_iy^j\nabla_j y^k\nabla_k f \nonumber\\
       &\qquad\qquad +\frac 1{24} R_{abcd}\omega^{ck}\partial_k f y^ay^by^d + \mathcal O(\hbar^2).\nonumber
  \end{align}
 An explicit calculation of the covariant derivatives shows that in
 third order one has
 \begin{align}\label{OperatorIteration2}
    \hat f^{(3)}&=f+\partial_kfy^k + \frac
       12\omega_{jl} (\nabla_k X_f)^l y^jy^k + \nonumber\\
      &\qquad+ \frac 16\big[\omega_{jl}(\nabla_i\nabla_kX_f)^l +
      \Gamma_{ijl}(\nabla_kX_f)^l-\frac
      14R_{ijkl}X_f^l\big]y^iy^jy^k,
 \end{align}
 which only contains covariant derivatives of the vector field
 $X_f$, thus the $\nabla$-operators do not act on the $y^i$. The
 quantization map
  $$ (\pi_\otimes^0)^{-1} : C^\infty(M)[[\hbar]] \rightarrow
  \Gamma_D(\mathcal W),\ f\mapsto \hat f$$
 allows for the definition of a star product on
 $C^\infty(M)[[\hbar]]$, namely
  \begin{equation}\label{StarProdDefi}
         f\ast g:= \pi_\otimes^0(\hat f\circ \hat g).
  \end{equation}
 Using the explicit expression (\ref{OperatorIteration2}), as well
 as the relation
  \begin{equation}\label{DefoQuantContraction}
        \pi_\otimes^0(y^{a_1} \dots y^{a_k} \circ y^{i_1}\dots y^{i_k}) =
  \Big( \frac {i\hbar}2\Big)^k \sum_{\pi \in S_k} \omega^{a_1
  i_{\pi(1)}} \dots \omega^{a_k i_{\pi(k)}},
  \end{equation}
 we arrive at
  \begin{equation}\label{StarProdExpl}
          f\ast g = fg - \frac {i\hbar}2 \omega(X_f,X_g) + \frac{\hbar^2}4
      (\nabla_jX_f)^b(\nabla_bX_g)^j+ O(\hbar^3).
  \end{equation}
 The conditions for a star product mentioned at the beginning of the
 section are easily checked, using the fact that the Poisson bracket can be expressed as
 $\{f,g\}= \omega(X_g,X_f)$. For $M=\mathbb R^{2n}$ with $\Gamma=0$ one obtains
  \begin{equation}\label{vollerOperatorflach}
    \hat f = \sum_{k=0}^\infty \frac 1{k!}(\partial_{i_1} \dots
    \partial_{i_k} f) y^{i_1}\dots y^{i_k},
  \end{equation}
 and thus the Groenewold-Moyal product \cite{Groenewold_QM,
 moyal:original}
  \begin{equation}
   f\ast g(x)=\exp\Big(\frac {i\hbar}2\omega^{ij}\frac \partial{\partial
   x^i}\frac \partial{\partial y^j}\Big) f(x)g(y)\bigg|_{y=x}.
  \end{equation}

 In general the result seems to depend on the symplectic connection
 chosen, but different connections lead to equivalent star products,
 where two algebras $\mathcal A_1,\mathcal A_2$ are considered
 equivalent if there is a formal sum
  $$ S=1+ \sum_{k\geq 1} \hbar ^k S_k$$
 of differential operators $S_k :\mathcal A_1\rightarrow \mathcal
 A_2$, with $Sf\ast_2 Sg=S(f\ast_1 g)$. Still it is possible
 to obtain nonequivalent star products by allowing for higher order
 terms in the curvature $\Omega=\frac i\hbar[\tilde \Omega,\cdot]$:
  $$\tilde \Omega = \omega  + \sum_{k\geq 1} \hbar^k \omega_k,$$
 with closed 2-forms $\omega_k$. The equivalence class of the
 resulting star product depends on the cohomology class of the
 $\omega_k$, and as every star product can be constructed this way,
 $H^2(M,\mathbb C)[[\hbar]]$ parametrizes inequivalent star products
 on $(M,\omega)$. The class $[\omega] + \sum_k\hbar^k [\omega_k]$ is
 called Fedosov class.

\section{Representations of the deformed algebra}
\begin{bsp}[The symplectic Clifford algebra]
 We consider a vector space case $M=V$ with constant
 symplectic form $\omega$, and $\Gamma=0$. The following
 construction is in complete analogy to the construction of
 representations of the Clifford algebra over a metric vector space \cite{Varadarajan_SUSY}.
 The Weyl algebra is $W(V^\ast)=$Sym$(V^\ast_\mathbb C)[[\hbar]]$ with commutation relation
 $[v,w]=i\hbar\omega^{-1}(v,w)$. A given polarization $P\subset V_\mathbb
 C$ is transferred to the dual space $V^\ast _\mathbb C$ by means of
 the musical isomorphism
  $$ \flat : V \rightarrow V^\ast , \
  v^\flat(w)=\omega(w,v)=\omega_{ab}w^a v^b.$$
 For a real polarization we consider $P$ as a subset of $V$, and
 choose a Darboux basis $q^i,p_j$ of $V^\ast$ such that
 $P^\flat=$ span$\{q^1,\dots,q^n$\}. A
 representation of $W(V^\ast)$ is defined on Sym$(P^\flat_\mathbb
 C)[[\hbar]]$, i.e. on complex polynomials in the $q^i$, by
 \begin{align}
    \sigma(q)\psi&=q\psi \\
    \sigma(p)\psi&=[p,\psi]=  i\hbar \iota(p)\psi,\nonumber
 \end{align}
 for $q\in P_\mathbb C^\flat$, $p\in$ span$\{p_1,\dots,p_n\}$, and
 $$ \iota(p)q^{i_1}\dots q^{i_k}=\sum_{j=1}^k
 \omega^{-1}(p,q^{i_j})q^{i_1}\dots \breve q^{i_j}\dots q^{i_k}.$$
 For a basis element $p_i$ we obtain for $\sigma(p_i)$ the
 well-known Schrödinger operator $\frac \hbar i\partial_{q^i}$.
 Allowing in a slight generalization for arbitrary functions $\psi$ in
 $q$, we get the Schrödinger (or position space) representation of
 quantum mechanics.\\
 Consider now a totally complex polarization $P\subset V_\mathbb
 C$, such that $D=(P\cap \overline P)\cap V=0$. We choose a
 complex basis $\{z^i,\overline z^i\}$ of $V_\mathbb C^\ast$ such that $P^\flat=$ span$\{z^1,\dots,z^n\}$. Now the
 representation space is Sym($\overline P^\flat)[[\hbar]]$, i.e.
 polynomials in $z$, or holomorphic functions, and
  \begin{align}
    \rho(z)\psi &= z\psi ,\\
    \rho(\overline z)\psi &=[\overline z,\psi]= i\hbar \iota(\overline z)\psi, \nonumber
  \end{align}
 for $z\in \overline P^\flat$, $\overline z\in  P^\flat$, and
  $$ \iota(\overline z)z^{i_1}\dots z^{i_k}=\sum_{j=1}^k
  \omega^{-1}(\overline z,z^{i_j})z^{i_1} \dots \breve z^{i_j}\dots
  z^{i_k}.$$
 This way we get the Fock representation of $W(V^\ast)$, with
 $\rho(\overline z^i)=2\hbar \partial_{z^i}$, in case $\omega^{-1}$
 assumes the normal form $\omega^{-1}(z^i,\overline
 z^j)=2i\delta^{ij}$. The case of a general
 polarization $P\subset V_\mathbb C$ is treated analogously, with
 representation space Sym($\overline P^\flat)[[\hbar]]$, which can
 also be thought of as a space of complex functions on $V/D$, where $D=(P\cap
 \overline P)\cap V$.
\end{bsp}
 The example suggests that in order to obtain classical quantum
 mechanics on a symplectic vector space one needs not only the star
 product, but also a polarization (though all polarizations are
 equivalent), which usually occurs in the framework of geometric
 quantization. It therefore strongly hints at the necessity to
 consider 'polarized deformation quantization' (a term apparently
 introduced by Bressler and Donin, \cite{bressler_PolDefQua}) also in the general
 case of a symplectic manifold, and thus to unify geometric and
 deformation quantization.

\bigskip

 It is our aim to define a representation of the deformed algebra
 $\mathcal A=(C^\infty(M)[[\hbar]],\ast)$ (\ref{StarProdDefi}) on the Hilbert space
 $\mathcal H_P$ (\ref{geomHilbertSpace}) of geometric quantization. We already remarked that
 it is not obvious whether this is possible, and if so, whether the
 undetermined structures (symplectic connection and polarization)
 have to satisfy some compatibility condition. Therefore, we
 restrict our attention to the special cases of a cotangent bundle
 and a Kähler manifold, where we have a natural choice for these
 structures. For a few observables the representation on $\mathcal
 H_P$ has already been defined by geometric quantization in (\ref{geomOperatorDefi}) as
  \begin{equation}
    \rho(f) =-i\hbar \big(\nabla_{X_f}\otimes \mathbf 1+\mathbf 1
    \otimes \mathcal L_{X_f}\big) + f\mathbf 1\otimes \mathbf 1,
  \end{equation}
 (which was denoted by $\tilde f$ before) in case the flow of $X_f$ preserves the polarization.
 In order to extend $\rho$ to an algebra homomorphism we must make
 sure that
  \begin{equation}\label{KompiBdg}
        \rho(f)\rho(g)=\rho(f\ast g)
  \end{equation}
 is satisfied whenever the
 flows of the Hamiltonian vector fields of $f,g$ and $f\ast g $
 preserve the polarization. Then we can define $\rho(f\ast g)$ by
 $\rho(f)\rho(g)$ if $f$ and $g$ satisfy the quantization condition
 but $f\ast g$ does not, and iterate this procedure to obtain the
 operators for a large class of functions.

\paragraph{Cotangent bundle} Let $(Q,g)$ be an $n$-dimensional, orientable semi-Riemannian
 manifold, and $M=T^\ast Q$. $M$ carries a natural symplectic form,
 given by $\omega=dp_i\wedge dq^i$, where $\{q^i\}$ are coordinates on
 $Q$, and $\{q^i,p_j= dq^j\}$ the induced coordinates on $T^\ast Q$.
 The vertical polarization of $(M,\omega)$ is spanned by the vector
 fields $\partial_{p_i}$, so that polarized functions on $M$ can be
 identified as functions on $Q$. For the
 symplectic potential given by $\theta=p_idq^i$, wave
 functions are just polarized functions. If $f$ and $g$ are
 polarized, then so is $fg$, and we have $\rho(f)\rho(g)=\rho(fg)$,
 where the operators are given by multiplication with the
 corresponding functions. In this case the compatibility condition
 (\ref{KompiBdg}) becomes
  \begin{equation}
    f\ast g=fg.
  \end{equation}
 Further we consider the case that $f$ is linear in $p$,
 $f(q,p)=a^i(q)p_i$, and $g$ polarized. Then we have $\rho(f)= $
 Sym($a^i(\tilde q)\tilde p_i)$, which we can be reordered as
  $$ \rho(f)=\tilde p^ia _i(\tilde q) +
  \frac{i\hbar}2(\partial_ia^i)(\tilde q).$$
 This gives
 \begin{align}
   \rho(f)\rho(g)&=\Big[\tilde p_ia^i(\tilde q)+ \frac{i\hbar}2(\partial_ia^i)(\tilde
     q)\Big]g(\tilde q)\nonumber \\
   &= \tilde p_ia^i(\tilde q)g(\tilde q)+ \frac{i\hbar}2\big(\partial_i(a^ig)\big)(\tilde
     q) -  \frac{i\hbar}2 a^i(\tilde q)\partial_i g(\tilde q)\\
   &=\rho\Big(fg-
   \frac{i\hbar}2\partial_{p_k}f\partial_{q^k}g\Big),\nonumber
 \end{align}
 and condition (\ref{KompiBdg}) becomes $ f\ast g=  fg +\frac{i\hbar }2\{f,g\}$.
 Analogously one obtains $g\ast f= gf +\frac{i\hbar }2\{g,f\}$. In
 case $fg$ is a polynomial (or power series) of degree $\geq2$ in $p$, then also $f\ast
 g=fg+\mathcal O(\hbar^1)$ is a polynomial (or power series) of degree at least 2 in
 $p$, and it follows that $f\ast g$ is not directly quantizable.
 Thus the compatibility condition only applies to the
 cases considered so far, and we have
\begin{theo}\label{KompiTheorem}
 A star product (defined by a symplectic connection) on $(T^\ast Q,\omega)$ is compatible
 with geometric quantization of the vertical polarization in the sense
 that (\ref{KompiBdg}) holds, iff for any polarized functions $f,g$ and
 any function $h$ such that $X_h$ preserves the polarization, the
 following are satisfied
  \begin{enumerate}
    \item $f\ast g=fg$,
    \item $f\ast h= fh+\frac {i\hbar}2\{f,h\},\qquad h\ast f=hf +
    \frac{i\hbar}2 \{h,f\}$.
  \end{enumerate}
\end{theo}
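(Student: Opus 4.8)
The plan is to observe that the theorem is essentially a bookkeeping statement: it claims that the compatibility condition (\ref{KompiBdg}), a priori a family of constraints indexed by all triples $(f,g,h)$ of functions whose Hamiltonian vector fields preserve the vertical polarization, collapses to the two families of identities listed. The forward direction is immediate, so the real content is the converse, and the key point is a \emph{closure} argument: the functions $h$ with $X_h$ preserving the vertical polarization are exactly those of the form $h(q,p)=a^i(q)p_i+b(q)$ (equation (\ref{ObsAffLinReell})), i.e. affine-linear in the fibre coordinates, and this class is \emph{not} closed under the ordinary product, nor under $\ast$. Hence whenever two such functions are multiplied to produce something of fibre-degree $\ge 2$, the product is simply not in the domain where (\ref{KompiBdg}) is being asserted, and no constraint arises from that pair.

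First I would record the easy implication: if (\ref{KompiBdg}) holds in full, then specializing to the cases already computed in the text immediately before the theorem — $f,g$ both polarized (fibre-degree $0$), and $f$ polarized with $h$ quantizable — yields exactly items 1 and 2, since there $\rho(f)\rho(g)=\rho(fg)$ and $\rho(f)\rho(h)=\rho(fh+\tfrac{i\hbar}{2}\{f,h\})$ were established by direct operator manipulation using $\rho(f)=$ Sym$(a^i(\tilde q)\tilde p_i)$ and the reordering formula $\rho(f)=\tilde p_i a^i(\tilde q)+\tfrac{i\hbar}{2}(\partial_i a^i)(\tilde q)$. Conversely, assume 1 and 2. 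I would then enumerate, up to the obvious symmetry $f\leftrightarrow g$, the possible fibre-degrees of a pair $(f,g)$ with both $X_f,X_g$ polarization-preserving: either both are polarized, or one is polarized and the other affine-linear in $p$, or both are affine-linear in $p$. In the first two cases, 1 and 2 are precisely the statements that (\ref{KompiBdg}) holds (again via the operator identities already in the text). In the remaining case, $f=a^i(q)p_i+b(q)$ and $g=c^j(q)p_j+e(q)$, so $fg$ has fibre-degree $2$ in general; since $f\ast g=fg+\mathcal O(\hbar)$ with each higher coefficient a bidifferential operator applied to $f,g$, and differentiation in $q$ cannot lower fibre-degree while the Poisson-bracket-type terms $\{f,g\}$ etc. are at most fibre-degree $1$ only after they have first \emph{raised} degree via $\partial_{p}$ — more carefully, one checks from (\ref{StarProdExpl}) and the structure of Fedosov's construction that the leading fibre-degree-$2$ part of $f\ast g$ is just $a^i c^j p_i p_j$, which is genuinely quadratic whenever $a,c\not\equiv 0$ — the product $f\ast g$ is a polynomial of degree $\ge 2$ in $p$, hence by (\ref{ObsAffLinReell}) its Hamiltonian vector field does \emph{not} preserve the vertical polarization. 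Therefore the pair $(f,g)$ imposes no instance of (\ref{KompiBdg}), and the equivalence is complete.

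The step I expect to be the main (minor) obstacle is the fibre-degree bookkeeping in the third case: one must be sure that $f\ast g$ really does retain an irreducible fibre-degree-$2$ part and cannot accidentally collapse to something affine-linear in $p$ through cancellations in the higher $\hbar$-orders of the Fedosov star product. The cleanest way to see this is that the degree-$0$ (in $\hbar$) part of $f\ast g$ is $fg$, which is manifestly fibre-quadratic, and no term of positive $\hbar$-order can cancel it since those terms carry at least one power of $\hbar$; thus $f\ast g$ is fibre-quadratic already at order $\hbar^0$, and (\ref{ObsAffLinReell}) disqualifies it regardless of the higher-order structure. The one genuinely degenerate subcase, $a^i c^j p_i p_j\equiv 0$ as a function (e.g. $a\equiv 0$ or $c\equiv 0$), reduces to one of the first two cases and is covered by 1 and 2. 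This makes the argument robust and independent of the detailed form of the symplectic connection, as the statement requires.
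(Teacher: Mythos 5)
Your proposal is correct and takes essentially the same route as the paper: the operator computations for polarized and affine-linear observables given just before the theorem supply the forward direction and cases 1--2, and the observation that $f\ast g=fg+\mathcal O(\hbar)$ retains the fibre-quadratic term $fg$ at order $\hbar^0$, hence falls outside the class (\ref{ObsAffLinReell}) and imposes no instance of (\ref{KompiBdg}), disposes of the remaining case exactly as in the text. Your explicit handling of the degenerate subcase $a^i c^j p_i p_j\equiv 0$ is a small refinement that the paper passes over silently.
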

 We recall that the first terms of the star product (\ref{StarProdExpl}) are $f\ast
 g=fg + \frac{i\hbar}2\{f,g\}+ \mathcal O(\hbar^2)$, so that the
 conditions are equivalent to the vanishing of all higher order
 terms. Bordemann et. al. \cite{BordemannNW-StarprodKot} have shown that there is a natural
 lift of any torsion-free connection on $Q$ to $T^\ast Q$, such that
 the resulting connection is torsion-free, symplectic, and leads to
 a homogeneous star product, i.e. for two homogeneous polynomials of degree $k$
 and $l$ in $p$, their star product is a homogeneous polynomial of
 degree $k+l$ in $p$ and $\hbar$. More precisely, $\mathcal H=p_i\frac \partial
 {\partial p_i} +\hbar \frac{\partial}{\partial\hbar}$ satisfies
  $$ \mathcal H(f\ast g) = (\mathcal H f)\ast g+ f\ast (\mathcal Hg). $$
 In particular, this implies that the
 conditions of theorem \ref{KompiTheorem} are satisfied.
 Further, any function polynomial in the momenta
 can be expressed as a finite sum of star products of functions that
 are affine-linear in $p$. Explicitly the lift is given by
  \begin{align}\label{kotbdlIndZshg}
    \Gamma^{k}_{ij}&=-\Gamma^{\overline j}_{i\overline
       k}=-\Gamma^{\overline i}_{\overline kj}= \tilde\Gamma^k_{ij} ,\nonumber\\
    \Gamma^{\overline k}_{ij}=\frac {p^a}3& \Big(2\tilde
       \Gamma^a_{jl}\tilde\Gamma^l_{ki} - \partial_j\tilde
       \Gamma^a_{ki} + \text{cycl.}(ijk)\Big), \\
    R^l_{kij}=-R^{\overline k}_{\overline lij}&=\tilde
      R^l_{kij},\qquad R^{\overline l}_{k i \overline j} =\frac
      13\big(\tilde R^j_{lki}+\tilde R^j_{kli}\big), \nonumber\\
    R^{\overline i}_{jkl} = \frac {p^a}3 \Big( \nabla_i \tilde
      R^a_{jlk} - &3\tilde \Gamma^a_{im}\tilde R^m_{jlk} - \tilde
      \Gamma^a_{lm}\tilde R^m_{ijk} + \tilde\Gamma^a_{km}\tilde
      R^m_{ijl} + (i\leftrightarrow j)\Big),\nonumber
  \end{align}
 where $\tilde \Gamma^k_{ij}$ and $\tilde R^i_{jkl}$ are the
 Christoffel symbols and curvature tensor on $Q$, and $\Gamma^k_{ij},
 R^i_{jkl}$ the lifted objects on $T^\ast Q$. The indices run from
 1 to $n$, an unbarred index $i$ stands for $q^i$, a barred index
 $\overline i$ for $p_i$. Components not
 listed vanish. This lift has the
 property that every geodesic on $T^\ast Q$ is projected to a
 geodesic on $Q$ \cite{plebanski:zushgLiftKotanBdl}. As $Q$ is a
 semi-Riemannian manifold there is a natural connection on $Q$ to start
 with, the Levi-Civita connection of $g$.\\
 Thus condition (\ref{KompiBdg}) is satisfied for the most natural
 choice of star product on $T^\ast Q$, and every function polynomial
 in the momenta can be quantized, although the explicit calculation
 of the corresponding operator becomes tedious for high order
 polynomials. It should also be emphasized that the quantization
 really depends on the metric $g$ on $Q$ (or the choice of a
 torsion-free connection), and not just on $\omega$.

\begin{bsp}[Quadratic observables]
 We determine the kinetic energy operator, i.e. the quantization
 of $  g^{ab}(q)p_ap_b ,$
 which we split into two parts as $g^{ab}p_ap_b=f_ah^a$, where
  \begin{equation}
   f_a(q,p)=p_a,\qquad h^a(q,p)=g^{ab}p_b.
  \end{equation}
 Due to the homogeneity of the star product we have
  $$ h^a\ast f_a=h^af_a + \frac{i\hbar}2\{h^a,f_a\} + \frac{\hbar^2}4
   (\nabla_\nu X_{h^a})^\mu(\nabla_\mu X_{f_a})^\nu , $$
 which, in the representation $\rho$, allows us to solve for
 $\rho(h^af_a)$:
  \begin{equation}\label{quadOpGlg1}
   \rho\big(g^{ab}p_ap_b\big)= \rho(h^a)\rho(f_a)-  \frac{i\hbar}2\rho\big(\{h^a,f_a\}\big) - \frac{\hbar^2}4\rho\Big(
  (\nabla_\nu X_{h^a})^\mu(\nabla_\mu X_{f_a})^\nu \Big).
  \end{equation}
 A simple calculation yields
 \begin{align*}
  \{h^a,f_a\} &= p_b\partial_{q^a}g^{ab}, \\
   (\nabla_\nu X_{h^a})^\mu(\nabla_\mu X_{f_a})^\nu &= 2g^{ab}\tilde
  \Gamma^i_{ja}\tilde \Gamma^j_{ib},
 \end{align*}
 where (\ref{kotbdlIndZshg}) has been used for the second equation.
 Now the operators on the right hand side of (\ref{quadOpGlg1}) are
 determined by geometric quantization:
  \begin{align*}
   \rho(f_a) = \tilde p_a,&\qquad \rho(h^a)= g^{ab}(\tilde
     q)  \tilde p_b- \frac{i\hbar}2 (\partial_b g^{ab})(\tilde q),\\
   \rho(\{h^a,f_a\}) &= (\partial_a g^{ab})(\tilde q)  \tilde p_b- \frac
    {i\hbar}2 (\partial_a\partial_b g^{ab})(\tilde q), \\
   \rho\Big((\nabla_\nu X_{h^a})^\mu&(\nabla_\mu X_{f_a})^\nu\Big)= 2 g^{ab}\tilde
  \Gamma^i_{ja}\tilde \Gamma^j_{ib} (\tilde q).
  \end{align*}
 Inserting these expressions into (\ref{quadOpGlg1}) gives
 \begin{align}
   \rho\big(g^{ab}p_ap_b\big) &= g^{ab}(\tilde
     q)  \tilde p_b\tilde p_a - \frac{i\hbar}2\big( (\partial_bg^{ab})(\tilde q)\tilde p_a +
     (\partial_ag^{ab})(\tilde q)\tilde p_b \big) \\
   &\quad - \frac{\hbar^2}4 \Big[(\partial_a\partial_bg^{ab})(\tilde q) +
     2g^{ab}\tilde \Gamma^i_{ja}\tilde \Gamma^j_{ib} (\tilde
     q)\Big].\nonumber
 \end{align}
 Here we make use of the explicit form of $\tilde q^a$ and $\tilde
 p_b$ (\ref{kanOpsGeomQuant}):
  $$ \tilde q^a = \hat q^a,\qquad \tilde p_a =\hat p_a -
  \frac{i\hbar}4g^{bc}\partial_ag_{bc}(\hat q),$$
 where $\hat q^a$ is multiplication by $q^a$ and $\hat
 p_b=-i\hbar\partial_{q^b}$. $\tilde p_a$ can also be written as
 $\tilde p_a = \hat p_a - \frac {i\hbar}2g^{-1/2}\partial_a
 g^{1/2}$, with $g:=|\det g|$. Then the first term becomes
  $$g^{ab}\tilde p_a\tilde p_b= g^{ab}\hat p_a\hat p_b-i\hbar
  g^{ab}g^{-1/2}\partial_ag^{1/2}\hat p_b -
  \hbar^2g^{ab}g^{-1/2}\partial_a\partial_b g^{1/2},$$
 and we obtain
  \begin{align}
   \rho\big(g^{ab}p_ap_b\big) &= g^{ab}\hat p_a\hat p_b - i\hbar
   g^{-1/2}\partial_a(g^{1/2}g^{ab})\hat p_b - \\
    &\quad-\frac {\hbar^2}4\Big[\partial_a\partial_b
    g^{ab}+2g^{-1/2} \partial_a(g^{ab}\partial_b g^{1/2})+2g^{ab} \tilde \Gamma^i_{ja}\tilde
    \Gamma^j_{ib}\Big].\nonumber
  \end{align}
 The first two terms yield $-\hbar^2$ times the Laplace-Beltrami
 operator $\Delta=g^{-1/2}\partial_ag^{1/2}g^{ab}\partial_b$.
 In order to simplify the last four terms we use normal coordinates, which satisfy
 $\partial_a g_{ij}=\partial_ag^{ij}=\Gamma^k_{ij}=0$ in a fixed
 point, and are left with $-\frac {\hbar^2}4$ times
  $$ \partial_a\partial_b g^{ab} +2
  g^{-1/2}g^{ab}\partial_a\partial_b g^{1/2}.$$
 Using $\partial_a g^{1/2}=\frac 12g^{1/2}g^{kl}\partial_a g_{kl}$
 and the relation $\partial_a g^{kl}=-g^{km}g^{ln}\partial_ag_{mn}$,
 which directly follows from $g^{kl}g_{lm}={\delta^k}_m$, we arrive
 at
  $$  \rho\big(g^{ab}p_ap_b\big) = -\hbar ^2\Delta
  -\frac{\hbar^2}4g^{ac}g^{bd}[\partial_a\partial_b
  g_{cd}-\partial_a\partial_cg_{bd}].$$
 Here we recognize the scalar curvature of $Q$ (in normal
 coordinates):
   \begin{align}
 \tilde R&= g^{ik}g^{jl}\tilde R_{ijkl} \nonumber\\
  &=\frac 12 g^{ik}g^{jl} \big(\partial_l\partial_i g_{jk} +\partial_j\partial_k
   g_{il} -\partial_i\partial_k g_{jl}- \partial_l\partial_j
   g_{ik}\big) \\
  &= g^{ik}g^{jl} \big(\partial_l\partial_i g_{jk}  -\partial_i\partial_k
  g_{jl}\big).\nonumber
 \end{align}
 Thus the final result reads
  \begin{equation}
    \rho\big(g^{ab}p_ap_b\big)=-\hbar^2\Big(\Delta-\frac{\tilde
    R}4\Big).
  \end{equation}
 It is of the well-known form $\Delta-\alpha \tilde R$; there has
 been a long debate over the correct value of $\alpha$ though. E.g. our result
 $\alpha=1/4$ was obtained by a variational principle in
 \cite{kamokawai_schwingervariation}, whereas Cheng got $\alpha=1/3$
 from a path integral formalism \cite{Cheng_sglausPfadint}, and Woodhouse gives an argument for
 $\alpha=1/6$ in the framework of geometric quantization \cite{woodhouse:quantisierung}. In the
 formalism advocated here, we see that some inevitable arbitrariness
 lies in the choice of the half-form (here $\sqrt{g^{1/2}d^nq}$), and that
 different symplectic connections would lead to different
 results. Still, our choices are the most natural ones.
\end{bsp}

\paragraph{Kähler manifolds}
 For the choice $\theta=-i\frac{\partial K}{\partial z^k}dz^k $ and $\mu=d^n
 z$ wave functions are polarized, the situation is thus very similar
 to the cotangent bundle case. Functions of the form $u^k(z)\frac{\partial K}{\partial
 z^k} + v(z)$ can be quantized directly in geometric quantization,
 and
  $$ \tilde z^a\psi(z)=z^a\psi(z),\qquad
  (\widetilde{\partial_{z^a}K})\psi(z)=\hbar\partial_{z^a} \psi(z).$$
 Theorem (\ref{KompiTheorem}) continues to hold if we replace
 $T^\ast Q$ by a Kähler manifold $M$ and the vertical polarization
 by the holomorphic Kähler polarization. Thus compatibility is
 fulfilled if for $f,g$ holomorphic also $f\ast g$ is
 holomorphic, and the star product of a function affine-linear in $\frac{\partial K}{\partial
 z^k}$ with a holomorphic function is affine-linear again.
 There is a natural symplectic connection on Kähler manifolds, given
 by the Levi-Civita connection
 of the Kähler metric $g=\frac 12
 \frac{\partial^2K}{\partial z^j\partial \overline
 z^k}\big(dz^j\otimes d\overline z^k+ d\overline z^k\otimes
 dz^j\big)$. Defining
  \begin{equation}
    A_{j\overline k}=\frac{\partial^2K}{\partial z^j\partial \overline
 z^k}, \qquad  A_{j\overline k}A^{\overline kl} = A_{k\overline j}A^{\overline lk} = {\delta_j}^l,
  \end{equation}
 the symplectic form becomes $\omega_{j\overline k}=-\omega_{\overline
 kj}=iA_{j\overline k}$, and its inverse $\omega^{j\overline
 k}=-\omega^{\overline kj }=iA^{\overline kj}$.
  The only non-vanishing connection coefficients are given by \cite{jost:RiemGeom}
  \begin{equation}\label{Kaehlerconnection}
         \Gamma^k_{ij} = A^{\overline lk}\partial_{j}A_{i\overline
  l},\qquad \Gamma^{\overline k}_{\overline {ij}}=A^{\overline
  kl}\partial_{\overline j}A_{m\overline i},
  \end{equation}
 where $\partial_j=\partial_{z^j}$ and $\partial_{\overline
 j}=\partial_{\overline z^j}$,
 and the curvature coefficients are determined by
  \begin{equation}\label{KaehlerKrmg}
               R_{k\overline li\overline j}= i\partial_i\partial_{\overline l}
      A_{k\overline j} - iA^{\overline n m}\partial_i A_{k\overline
      n} \partial_{\overline l}A_{m\overline j}
  \end{equation}
 plus the symmetries
  \begin{align}
    R_{k\overline li\overline j}=-R_{k\overline l\overline ji}&=- R_{\overline j i\overline l
    k}= R_{\overline j i k\overline l}.
  \end{align}
 Other components vanish. Here
 $R_{\mu\nu\kappa\lambda}=\omega_{\mu\tau}R^\tau_{\nu\kappa\lambda}$,
 whereas usually in the context of Kähler manifolds the index is
 lowered with the metric.
 Due to the obvious relation $\partial_iA_{j\overline
 k}=\partial_jA_{i\overline k}$ and (\ref{KaehlerKrmg}), or the first Bianchi identity, also
 the following symmetries hold:
 \begin{equation}
     R_{\overline kl\overline ij}=R_{\overline kj\overline il},\qquad
    R_{k\overline li\overline j} =R_{k\overline ji\overline l}.
 \end{equation}
 This allows us to determine $r^{(3)}$ (\ref{curvIterations}):
 \begin{align}
   r^{(3)} &= -\frac 18 R_{\kappa\lambda\mu\nu}y^\kappa y^\lambda y^\mu
       dx^\nu\nonumber\\
    &= -\frac 18\Big[ R_{k\overline li\overline j}\hat z^k\hat{\overline
      z}^l\hat z^i d\overline z^j + R_{k\overline l\overline ji}\hat z^k\hat {\overline
      z}^l\hat{\overline z}^j dz^i + R_{\overline lki\overline j}\hat z^k\hat {\overline
      z}^l\hat z^i d\overline z^j + R_{\overline lk\overline ji}\hat z^k\hat {\overline
      z}^l\hat{\overline z}^j dz^i \Big]\\
    &=-\frac 14R_{k\overline li\overline j}\hat z^k\hat {\overline
    z}^l\big(\hat z^id\overline z^j-\hat{\overline z}^jdz^i\big),\nonumber
 \end{align}
 where $\hat z^k,\hat{\overline z}^l$ are the local generators of $T^\ast M$,
 collectively denoted $y^\mu$ before, and satisfying the canonical commutation
 relation $[\hat{\overline z}^l,\hat z^k]_\circ= i\hbar\omega^{\overline lk}=\hbar
 A^{\overline lk}$. As before for $y^\mu$, here $\hat z^j\hat{\overline z}^k$ denotes
 the symmetrized (!) product of $\hat z^j$ and $\hat{\overline
 z}^k$, whereas simple composition of operators is $\hat z^j\circ \hat{\overline
 z}^k$. We calculate the first terms of the local operators
 $\hat f,\hat h \in \Gamma_D(\mathcal W)$ corresponding to the
 functions $f(z,\overline z)=\overline w_a z^a$, $h(z,\overline z)=
 -i\frac{\partial K}{\partial z^m}$, according to
 (\ref{OperatorIterationGlg}), and (\ref{StarProdDefi}), in order to prove that there are
 no contributions to $f\ast h$ in order $\hbar^2$ and $\hbar^3$. We could as well
 use the solution (\ref{OperatorIteration}) to (\ref{OperatorIterationGlg}) directly, but performing the iteration
 explicitly allows us to drop a few terms that do not contribute to
 the star product in third order. Introducing the
 $\hbar$-homogeneous elements $\hat f_{(n)}:=\hat f^{(n)}-\hat
 f^{(n-1)}$ (mod $\hbar^{(n+1)/2})$, we have $\hat f_{(1)}=
 \overline w_a\hat z^a $, and
 \begin{align}
   \hat f_{(2)}&= -\frac 12\overline w_a
       \Gamma_{bc}^a\hat z^b\hat z^c \nonumber\\
   \hat f_{(3)}&\sim -\frac 16\overline w_a\partial_{\overline
     d}\Gamma_{bc}^a \hat z^b\hat z^c\hat{\overline z}^d + \frac
     i\hbar \delta^{-1}[r^{(3)},\hat f^{(1)}]\\
    &= -\frac 16\overline w_a\partial_{\overline d}\Gamma_{bc}^a \hat z^b\hat z^c\hat{\overline z}^d
     +\frac 1{12} \overline w_a R^a_{bc\overline d} \hat z^b\hat
     z^c\hat {\overline z}^d.\nonumber
 \end{align}
 $\sim$ means equality up to terms not containing any operator
 $\hat {\overline z}^i$, which will not contribute. Observing that
 $\partial_{\overline d}\Gamma^a_{bc} = R^a_{c\overline d b}=-R^a_{bc\overline d}$, we
 obtain
  \begin{equation}
    \hat f_{(3)}\sim \frac 1{4} \overline w_a R^a_{bc\overline
    d}\hat z^b\hat z^c\hat{\overline z}^d.
  \end{equation}
 Further $\hat h_{(1)}=-i \partial_m\partial_a K \hat z^a - i
 A_{m\overline b}\hat{\overline z}^b$, and
  $$ i\hat h_{(2)}\sim \frac 12\partial_a
  A_{m\overline b}\hat z^a\hat {\overline z}^b+\frac 12 \hat
  {\overline z}^c \underset 0{\underbrace{\nabla_{\overline c}(A_{m\overline b}\hat
  {\overline z}^b)}} + \frac 12\partial_a A_{m\overline b}\hat z^a\hat
  {\overline z}^b.$$
 The vanishing of $\nabla_{\overline c}(\omega_{m\overline b}\hat
 {\overline z}^b)$ is a consequence of the connection being
 symplectic: $\nabla \omega=0$, and the vanishing of some
 connection coefficients (\ref{Kaehlerconnection}), and is easily
 checked directly. We are left with $\hat h_{(2)}\sim-i \partial_a
 A_{m\overline b}\hat z^a\hat {\overline z}^b$. Now we define
 $\approx$ to be equality mod terms containing less than two
 operators $\hat{\overline z}^i$, and get
  \begin{align}
   \hat h_{(3)}&\approx -\frac i3 \hat {\overline
    z}^c\nabla_{\overline c}\big(\partial_a A_{m\overline b}\hat
    z^a\hat{\overline z}^b\big) + \frac i\hbar
    \delta^{-1}[r^{(3)},h^{(1)}]\nonumber\\
   \approx -\frac 13&R_{m\overline ba\overline c}\hat{\overline z}^b\hat{\overline
     z}^c \hat z^a + \frac 1{12}R_{m\overline ba\overline c}\hat z^a\hat{\overline
     z}^b \hat{\overline z}^c = -\frac 14 R_{m\overline ba\overline c}\hat z^a\hat{\overline
     z}^b \hat{\overline z}^c
  \end{align}
 From the definition of the star product (\ref{StarProdDefi}), relation
 (\ref{DefoQuantContraction}) and the results for $\hat f,\hat h$ we
 can read off that there is no contribution to
 $f\ast h$ in order $\hbar^2$, in agreement with the compatibility
 condition, and in order $\hbar^3$ we get the following result:
  \begin{equation}\label{dritteOrdnung1}
    \pi_\otimes^0\big(\hat f_{(3)}\circ \hat h_{(3)}\big)=- \frac{\hbar^3}{64}
    \overline w_a R^a_{bc\overline d}R_{m\overline nk\overline
    l}A^{\overline dk}A^{\overline nb}A^{\overline lc}.
  \end{equation}
 But there are further contributions in order $\hbar^3,$ from $\hat f_{(5)}\circ
 \hat h_{(1)}$ and $\hat f_{(1)}\circ \hat h_{(5)}$, due to the term
 $r^2$ in (\ref{abelscherZshgBedg}). $(r^{(n)})^2$ contains terms of
 the form $y^{i_1}\dots y^{i_n}\circ y^{j_1}\dots y^{j_n}$, which
 can be reordered, using $y^i\circ y^j=y^iy^j+ \frac
 {i\hbar}2\omega^{ij}$, such that only fully symmetric expressions
 $y^{i_1}\dots y^{i_k}$, $k=0,\ldots,2n$, occur. Obviously,
 only such terms in $\hat f_{(5)}$ contribute to $\pi_\otimes^0(\hat f_{(5)}\circ \hat
 h_{(1)})$ that contain exactly one operator $y^i$. The only term of
 this kind in $\hat f_{(5)}$ comes from $\delta^{-1}\frac
 i\hbar[r^{(5)},\hat f^{(1)}]$ in the iteration formula for $\hat f$
 (\ref{OperatorIterationGlg}), and there only the $\delta^{-1}\frac i\hbar\pi_\otimes ^0\big((r^{(3)})^2\big)$ part
 of $r^{(5)}$ contributes. Neglecting those terms not contributing, we have
  \begin{align*}
    (r^{(3)})^2 &=\frac {i\hbar^3}{32}R_{k\overline li\overline j}
  R_{a\overline b c\overline d} \omega^{\overline la} \omega^{\overline bk}\omega^{\overline
  jc} d\overline z^d\wedge dz^i, \\
   r_{(5)}= \delta^{-1}\frac i\hbar(r^{(3)})^2 &= -\frac {\hbar^2}{32}R_{k\overline li\overline j}
  R_{a\overline b c\overline d} \omega^{\overline la} \omega^{\overline bk}\omega^{\overline
  jc}(\hat{\overline z}^d dz^i - \hat z^id\overline z^d),
  \end{align*}
 which leads to
  \begin{align*}
    \delta^{-1}\frac i\hbar [r_{(5)},\hat f_{(1)}] &= -\frac {\hbar^2}{32}R_{k\overline li\overline j}
  R_{a\overline b c\overline d} \omega^{\overline la} \omega^{\overline bk}\omega^{\overline
  jc}\omega^{\overline dm}\partial_m f\hat z^i, \\
    \delta^{-1}\frac i\hbar [r_{(5)},\hat h_{(1)}] &= -\frac {\hbar^2}{32}R_{k\overline li\overline j}
  R_{a\overline b c\overline d} \omega^{\overline la} \omega^{\overline bk}\omega^{\overline
  jc}\omega^{\overline mi}\partial_{\overline m} h \hat{\overline z}^d,
  \end{align*}
 where we have also neglected a contribution containing $\partial_m
 h \omega^{\overline dm}\hat z^i$.
 According to the discussion above, the last two lines give the only components of $\hat
 f_{(5)}$ and $\hat h_{(5)}$ that contribute to $f\ast h$ in
 order $\hbar ^3$:
  \begin{align*}
   \pi_\otimes ^0\big(\hat f_{(5)}\circ \hat
   h_{(1)}\big) &= -\frac{i\hbar^3}{128} R_{k\overline li\overline j}
  R_{a\overline b c\overline d} \omega^{\overline la} \omega^{\overline bk}\omega^{\overline
  jc}\omega^{\overline dm} \partial_m f\partial_{\overline n}h
  \omega^{\overline n i} \\
   &= \pi_\otimes^0\big(\hat f_{(1)}\circ \hat h_{(5)}\big).
  \end{align*}
 Adding up, we get
 \begin{equation}
   \pi_\otimes^0\big(\hat f_{(5)}\circ \hat h_{(1)} + \hat f_{(1)}\circ
   \hat h_{(5)}\big) = \frac {\hbar^3}{64}\overline w_a
   R^a_{bc\overline d} R_{m\overline nk\overline l} A^{\overline nb} A^{\overline
   dm}A^{\overline lc},
 \end{equation}
 which exactly cancels the contribution of $\pi_\otimes ^0(\hat
 f_{(3)}\circ \hat h_{(3)})$ (\ref{dritteOrdnung1}),
 and we conclude
 \begin{equation}
   f\ast h=fh+ \frac {i\hbar}2\{f,h\} + \mathcal O(\hbar^4).
 \end{equation}
 For two holomorphic functions $f,g$ the operators $\hat f_{(1)}$
 and $\hat g_{(1)}$ both contain only unbarred operators, the same
 is true for $\hat f_{(2)},\hat g_{(2)}$, and the summands of $\hat f_{(3)}$
 contain at least two unbarred and at most one barred operator.
 Therefore the contraction $\pi_\otimes^0(\hat f^{(3)}\circ \hat
 g^{(3)})$ vanishes. Further, $\pi_\otimes^0(\hat f_{(5)}\circ\hat
 g_{(1)})$ vanishes, as $\pi_\otimes^1(\hat f_{(5)})$ contains only
 unbarred operators as well. Thus
  \begin{equation}
    f\ast g=fg+  \mathcal O(\hbar^4).
  \end{equation}
 The vanishing of the $\hbar^2$ component in $f\ast h$ can be
 clearly attributed to the choice of connection; it is a
 result of $\nabla_{\overline c}(A_{m\overline b}\hat {\overline
 z}^b)=\nabla_{a}\hat {\overline z}^b=\nabla_{\overline
 a}\hat z^b=0$. Compatibility in order $\hbar^3$ appears somewhat
 mysterious however, resulting from a cancelation of quite different contributions. \\
 We have thus established compatibility up to order $\hbar^3$ for a large class of functions, which,
 together with compatibility for cotangent bundles,
 provides strong evidence for exact compatibility. For a complete proof
 one could try to imitate the proof of Bordemann et. al. for
 cotangent bundles \cite{BordemannNW-StarprodKot}, in order to
 establish homogeneity also for Kähler manifolds. It appears, though,
 that the Kähler connection is not homogeneous in the sense of
 \cite{BordemannNW-StarprodKot}, so that at least some modifications are
 unavoidable. If the star product is compatible with the holomorphic
 polarization for general Kähler manifolds, then it must be
 compatible with the antiholomorphic polarization as well, for
 symmetry reasons. Further, if it is homogeneous in $\partial_a K$, then it
 must be homogeneous in $\partial_{\overline a}K$ too.  \\
 It can easily be seen from the iteration formula and the vanishing of connection
 coefficients with both barred and unbarred indices, that
 compatibility holds exactly if the curvature vanishes.

\section{Conclusion}
 We have proposed an interpretation of geometric and deformation
 quantization as complementary theories, in an attempt to overcome
 the difficulties arising in geometric quantization in quantizing
 the observables, and to define a natural representation of the
 deformed algebra. \\
 It turned out to work well if phase space is a
 cotangent bundle, as a consequence of homogeneity of the star product, which was established
 in \cite{BordemannNW-StarprodKot}.
 Still, even after restriction to the vertical
 polarization the quantization map is not uniquely determined. One further needs a
 metric or, alternatively, a torsion-free connection on
 configuration space. This explains the mentioned difficulties in
 geometric quantization quite naturally; although the framework for classical
 Hamiltonian mechanics is a symplectic manifold $(M,\omega)$
 \cite{Arnold_Mechanik,woodhouse:quantisierung},
 quantization is not determined by $\omega$ alone. One should
 keep in mind that in the typical situation of classical mechanics
 one has a configuration space $(Q,g)$ and as phase-space $T^\ast Q$, where the
 metric is needed to specify the kinetic term in the Lagrangian:
 $L_{kin}=g^{ij}p_ip_j$. Therefore, from a physical point of view
 the occurrence of the metric in the quantization process is not
 too surprising.\\
 Then it is tempting to assign a similar role to the metric and its associated Levi-Civita connection
 in the important case of general Kähler manifolds. But for these, we could only establish compatibility of the two
 theories up to a finite order in $\hbar$. The vanishing of the obstructions has partly been shown to be
 related to the special choice of connection.\\
 The case of general polarizations was not treated in this paper.

\bibliographystyle{Lit}
\bibliography{literatur}

\end{document}